\tikzset{>=stealth}
\theoremstyle{plain}
\newtheorem{lemma}{Lemma}
\newtheorem{theorem}{Theorem}
\theoremstyle{definition}
\renewcommand{\geq}{\geqslant}
\renewcommand{\leq}{\leqslant}
\newcommand{\ints}{\mathbb{Z}}
\renewcommand{\leq}{\leqslant}
\renewcommand{\geq}{\geqslant}
\DeclareMathOperator{\B}{B}
\DeclareMathOperator{\K}{K}
\newcommand{\Nout}{N^{\text{out}}}
\newcommand{\Nin}{N^{\text{in}}}
\title{On the power domination number of de Bruijn and Kautz digraphs}
\author{Cyriac Grigorious \and Thomas Kalinowski \and Joe Ryan \and Sudeep Stephen}
\address{The University of Newcastle, Australia}
\date{\today}
\begin{document}

\begin{abstract}
  Let $G=(V,A)$ be a directed graph without parallel arcs, and let $S\subseteq V$ be a set of
  vertices. Let the sequence $S=S_0\subseteq S_1\subseteq S_2\subseteq\cdots$ be defined as follows:
  $S_1$ is obtained from $S_0$ by adding all out-neighbors of vertices in $S_0$. For $k\geqslant 2$,
  $S_k$ is obtained from $S_{k-1}$ by adding all vertices $w$ such that for some vertex
  $v\in S_{k-1}$, $w$ is the unique out-neighbor of $v$ in $V\setminus S_{k-1}$. We set
  $M(S)=S_0\cup S_1\cup\cdots$, and call $S$ a \emph{power dominating set} for $G$ if $M(S)=V(G)$.
  The minimum cardinality of such a set is called the \emph{power domination number} of $G$. In this
  paper, we determine the power domination numbers of de Bruijn and Kautz digraphs.
\end{abstract}

\maketitle

\section{Introduction}
Let $G = (V, A)$ be a directed graph. For a vertex $i\in V$ let $\Nin(i)$ and $\Nout(i)$ denote its in-
and out-neighborhood, respectively, i.e.,
\[\Nin(i)=\{j\in V\ :\ (j,i)\in A\},\qquad \Nout(i)=\{j\in V\ :\ (i,j)\in A\}.\]
For a node set $S$, we use the corresponding notation
\[\Nin(S)=\bigcup_{i\in S}\Nin(i),\qquad \Nout(S)=\bigcup_{i\in S}\Nout(i).\] Let $G $ be a directed graph
and $S$ a subset of its vertices. Then we denote the set monitored by $S$ with $M(S)$ and define it
as $M(S)=S_0\cup S_1\cup\cdots$ where the sequence $S_0,S_1,\ldots$ of vertex sets is defined by
$S_0=S$, $S_1=\Nout(S)$, and
\[S_k=S_{k-1}\cup\left\{w\,:\,\{w\}=\Nout(v)\cap\left(V\setminus S_{k-1}\right)\text{ for some } v \in
    S_{k-1}\right\}.\]
A set $S$ is called a \emph{power dominating set} of $G$ if $M(S)=V(G)$ and
the minimum cardinality of such a set is called the \emph{power domination number} denoted as
$\gamma_p(G)$.

The undirected version of the power domination problem was introduced in~\cite{Haynes2002a}. The
problem was inspired by a problem in electric power systems concerning the placements of
phasor measurement units. The directed version of the power domination problem was introduced as a
natural extension in~\cite{AazamiStilp2009} where a linear time algorithm was presented for digraphs whose
underlying undirected graph has bounded treewidth. Good literature reviews
on the power domination problem can be found in~\cite{Chang2012, Dorbec2008, Stephen2015}

A closely related concept is zero forcing which was introduced for undirected graphs by the
\emph{AIM Minimum Rank -- Special Graphs Work Group} in~\cite{AIM-2008-Zeroforcingsets} as a tool to
bound the minimum rank of matrices associated with the graph $G$. This notion was extended to
digraphs in~\cite{Barioli2009} with the same motivation. For a red/blue coloring of the vertex set
of a digraph $G$ consider the following color-change rule: a red vertex $w$ is converted to blue if
it is the only red out-neighbor of some vertex $u$. We say $u$ forces $w$ and denote this by
$u \rightarrow w$. A vertex set $S\subseteq V$ is called \emph{zero-forcing} if, starting with the
vertices in $S$ blue and the vertices in the complement $V\setminus S$ red, all the vertices can be
converted to blue by repeatedly applying the color-change rule. The minimum cardinality of a
zero-forcing set for the digraph $G$ is called the \emph{zero-forcing number} of $G$, denoted by
$Z(G)$.  Since its introduction the zero-forcing number has been studied for its own sake as an
interesting graph
invariant~\cite{BarioliBarrettFallatEtAl2010,BarioliBarrettFallatHallHogbenShaderDriesscheHolst-2012-ParametersRelatedto,BermanFriedlandHogbenEtAl2008,EdholmHogbenHuynhEtAl2012,Lu.etal_2015_Proofconjecturezero}. In~\cite{HogbenHuynhKingsleyMeyerWalkerYoung-2012-Propagationtimezero},
the \emph{propagation time} of a graph is introduced as the number of steps it takes for a zero
forcing set to turn the entire graph blue. Physicists have independently studied the zero forcing
parameter, referring to it as the \emph{graph infection number}, in conjunction with the control of
quantum systems~\cite{Severini2008}.

Recently, Dong \emph{et al.} (2015)~\cite{Dong2015a} investigated the domination number of
generalized de Bruijn and Kautz digraphs. Kuo \emph{et al.}(2015)~\cite{Kuo2015} gave an upper bound
for power domination in undirected de Bruijn and Kautz graphs. In this paper we study the directed
versions, i.e., the zero forcing number and power domination number of de Bruijn and Kautz
digraphs. Due to their attractive connectivity features these digraphs have been widely studied as a
topology for interconnection networks~\cite{Huang2006}, and some generalizations of these digraphs
were proposed~\cite{Imase1983}.

Section~\ref{sec:notation} contains some notation and precise statements of our main result. In
Section~\ref{sec:debruijn} we determine the power domination number and zero forcing number for de Bruijn
digraphs. In Section~\ref{sec:kautz} we determine the power domination number and zero forcing number for
Kautz digraphs.

\section{Notations and main result}\label{sec:notation}

We give an interpretation of the power domination problem and zero forcing problem as a set cover
problem. We call a vertex set $W$ \emph{strongly critical} if there is no vertex in $G$ which has
exactly one out neighbor in $W$. We call a vertex set $W$ \emph{weakly critical} if there is no
vertex outside $W$ which has exactly one out-neighbor in $W$. If $W$ is strongly (weakly) critical,
but no proper subset of $W$ is strongly (weakly) critical, then we call $W$ \emph{minimal strongly
  (weakly) critical}.

Note that a vertex set $S$ is a zero forcing set if and only if $S \cap W \neq \emptyset$ for every
strongly critical set $W\subseteq V$. Similarly, $S$ is a power dominating set if and only if $\Nout(S) \cap W \neq \emptyset$ for every
weakly critical set $W\subseteq V$, and therefore
\begin{align*}
Z(G) &= \min\left\{\lvert S\rvert\ : \ S \cap W \neq \emptyset \text{ for every strongly critical set}\ W \subseteq V \right\},\\
\gamma_{p}(G) &= \min\left\{\lvert S\rvert\ :\ (S\cup \Nout_G(S)) \cap W  \neq \emptyset \textrm{ for every weakly critical set } W\subseteq V \right\}.  
\end{align*}
For an integer $d\geq 2$, let $\ints_d=\{0,1,\ldots,d-1\}$ denote the cyclic group of order $d$. The
de Bruijn digraph, denoted $\B(d,n)$, with parameters $d\geq 2$ and $n\geq 2$ is defined to be the
graph $G=(V,A)$ with vertex set $V$ and arcs set $A$ where
\begin{align*}
V &= \ints_d^n=\left\{\left(a_1,\ldots,a_n\right)\ :\ a_i\in\ints_d\text{ for }i=1,\ldots,n\right\},\\
A &= \left\{\left((a_1,a_2,\ldots,a_n),(a_2,\ldots,a_n,b)\right)\ :\ (a_1,a_2,\ldots,a_n)\in V,\ b\in\ints_d\right\}.
\end{align*}
The Kautz digraph, denoted $\K(d,n)$, with parameters $d\geq 2$ and $n\geq 2$ is defined to be the
graph $G=(V,A)$ with vertex set $V$ and arcs set $A$ where
\begin{align*}
V &=\left\{\left(a_1,\ldots,a_n\right)\ :\ a_i\in\ints_{d+1},\ a_i\neq a_{i+1}\right\}\\
A &= \left\{\left((a_1,a_2,\ldots,a_n),(a_2,\ldots,a_n,b)\right)\ :\ (a_1,a_2,\ldots,a_n)\in V,\ b\in\ints_{d+1}\setminus\{a_n\}\right\}.
\end{align*}

Our main results are the following theorems.

\begin{theorem}\label{main:thm:debruijn}
  Let $G$ be a de Bruijn digraph with parameters $d,n\geq 2$. Then the zero forcing number and power domination number of $G$
  are $(d-1) d^{n-1}$ and $(d-1)d^{n-2}$, respectively.
\end{theorem}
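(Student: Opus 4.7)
My plan is to translate both invariants into combinatorial covering conditions via the highly structured out-neighborhoods of $\B(d,n)$. First I would observe that the out-neighborhood of any vertex $(a_1,\ldots,a_n)$ is exactly the ``row'' $R_p=\{(p_1,\ldots,p_{n-1},b):b\in\ints_d\}$ with $p=(a_2,\ldots,a_n)$, so the $d^{n-1}$ rows $\{R_p:p\in\ints_d^{n-1}\}$ partition $V$ into blocks of size $d$, and each $R_p$ is the common out-neighborhood of the $d$ vertices in the ``suffix class'' $P_p=\{(a,p_1,\ldots,p_{n-1}):a\in\ints_d\}$. Feeding this into the set-cover criteria from Section~\ref{sec:notation}, $W$ is strongly critical iff $|W\cap R_p|\neq 1$ for every $p$, and weakly critical iff additionally $P_p\subseteq W$ whenever $|W\cap R_p|=1$.

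For the zero-forcing number, the upper bound $Z\leq(d-1)d^{n-1}$ is witnessed by $S=\{v:v_n\neq 0\}$, whose complement has exactly one vertex per row; any non-empty subset of the complement has some row with exactly one element, so it is not strongly critical. For the matching lower bound, $|S|<(d-1)d^{n-1}$ gives $|V\setminus S|>d^{n-1}$, and pigeonhole provides a row with at least two vertices outside $S$; those two vertices already form a strongly critical set disjoint from $S$, contradicting that $S$ is zero-forcing.

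For the power-domination number, I would use the easy observation that $S$ is power dominating iff $S\cup\Nout(S)$ is a zero-forcing set. Via the row criterion this becomes: for every prefix $p$, either $|R_p\cap S|\geq d-1$, or $P_p\cap S\neq\emptyset$ (which forces $R_p\subseteq\Nout(S)$). Writing $A=\{p:|R_p\cap S|\geq d-1\}$ and summing $|S|$ first over the row partition and then over the suffix-class partition gives $|S|\geq(d-1)|A|$ and $|S|\geq d^{n-1}-|A|$ respectively (each $p\notin A$ needs its own vertex in the disjoint class $P_p$); taking the maximum and minimizing over $|A|$ yields $|S|\geq(d-1)d^{n-2}$, with the minimum at $|A|=d^{n-2}$.

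The hard part will be a construction matching this lower bound, in which both counting estimates are simultaneously tight: $|A|=d^{n-2}$, every $R_p$ with $p\in A$ contributes exactly $d-1$ elements of $S$, and the ending prefixes of those $(d-1)d^{n-2}$ elements bijectively cover $\ints_d^{n-1}\setminus A$. For $n\geq 3$ I propose $S=\{(a_1,\ldots,a_n):a_1=a_{n-1},\ a_n\neq a_2\}$, which has the right cardinality; prefixes $p$ with $p_1=p_{n-1}$ satisfy $|R_p\cap S|=d-1$ directly, while for $p_1\neq p_{n-1}$ the explicit vertex $(p_{n-2},p_1,p_2,\ldots,p_{n-1})$ lies in both $P_p$ and $S$. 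The edge case $n=2$ is handled separately by $S=\{(0,1),(0,2),\ldots,(0,d-1)\}$.
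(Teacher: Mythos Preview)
Your proposal is correct and rests on the same structural observation as the paper: the rows $R_p=X(p)$ partition $V$, each is the common out-neighbourhood of the suffix class $P_p$, and any two vertices in a row form a strongly critical set. The zero-forcing half is essentially identical to the paper's (your witness $\{v:v_n\neq 0\}$ is a different but equally simple choice to the paper's $\{v:v_1\neq v_n\}$).

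Where you genuinely diverge is in the power-domination half. For the lower bound, the paper introduces $\alpha_k=\#\{p:|S\cap R_p|=k\}$ and derives a contradiction from $\sum k\alpha_k=|S|$ together with $|I_0|+\alpha_{d-1}\geq d^{n-1}$; you collapse this to the two transparent inequalities $|S|\geq(d-1)|A|$ (row partition) and $|S|\geq d^{n-1}-|A|$ (suffix partition, using that the $P_p$ are pairwise disjoint), then minimise the maximum over $|A|$. This is shorter and makes the extremal profile $|A|=d^{n-2}$ explicit. For the construction with $n\geq 4$, the paper uses the arithmetic conditions $a_{n-1}=a_1+a_{n-2}$ and $a_n\neq a_1+a_2+a_{n-2}$ in $\ints_d$, whereas your choice $a_{n-1}=a_1$, $a_n\neq a_2$ avoids the group structure altogether and is verified by the single explicit witness $(p_{n-2},p_1,\ldots,p_{n-1})\in P_p\cap S$ whenever $p_1\neq p_{n-1}$. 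Both sets realise the same extremal configuration, but yours is simpler to state and to check.
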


\begin{theorem}\label{main:thm:kautz}
  Let $G$ be a Kautz digraph with parameters $d\geq 2$ and $n\geq 3$. Then, the zero forcing number
  and power domination number of $G$ are $(d-1)(d+1)d^{n-2}$ and $(d-1)(d+1)d^{n-3}$,
  respectively.
\end{theorem}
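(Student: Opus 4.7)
The plan is to mirror the argument for the de Bruijn case (Theorem~\ref{main:thm:debruijn}) using the analogous group structure of $\K(d,n)$. Partition $V(\K(d,n))$ into $(d+1)d^{n-2}$ out-classes $G_p = \{(p, b) : b \in \ints_{d+1} \setminus \{p_{n-1}\}\}$ indexed by $p \in V(\K(d,n-1))$, each of size $d$, together with dual in-classes $H_p = \{(y, p_1, \ldots, p_{n-1}) : y \neq p_1\}$. The key structural fact is that every vertex of $H_p$ has out-neighborhood exactly $G_p$, which drives both halves of the proof.

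For the zero forcing number, the lower bound follows because every two-element subset of some $G_p$ is strongly critical: each $u \in H_p$ has both elements as out-neighbors, and every other vertex has zero. Hence any zero forcing set contains at least $d - 1$ of the $d$ vertices in every $G_p$, giving $Z(\K(d,n)) \geq (d-1)(d+1)d^{n-2}$. The matching upper bound is achieved by taking $S$ to consist of any $d - 1$ vertices from each $G_p$: every nonempty $W \subseteq V \setminus S$ satisfies $|W \cap G_p| \leq 1$ for all $p$, and whenever equality holds some $u \in H_p$ sees a unique out-neighbor in $W$, so $W$ fails to be strongly critical.

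For the power domination number, use the equivalence that $S$ is power dominating iff $S \cup \Nout(S)$ is zero forcing, which in the group language says every $G_p$ has at most one vertex outside $S \cup \Nout(S)$. Let $A$ be the set of prefixes $p$ for which some $v \in S$ has suffix $p$, and $B = V(\K(d,n-1)) \setminus A$; then $|S \cap G_p| \geq d - 1$ for every $p \in B$ and $|S \cap H_q| \geq 1$ for every $q \in A$. Summing these two inequalities over the $G$- and $H$-partitions of $V$ yields $|S| \geq (d-1)|B|$ and $|S| \geq |A|$, which together with $|A| + |B| = (d+1)d^{n-2}$ give $|S| \geq (d-1)(d+1)d^{n-3}$.

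The main obstacle is the matching upper bound construction, which requires a set $B \subseteq V(\K(d,n-1))$ of size $(d+1)d^{n-3}$ such that each $b \in B$ has a unique out-neighbor in $B$ and each $a \notin B$ has a unique in-neighbor in $B$. For $n \geq 4$ the set $B = \{v \in V(\K(d,n-1)) : v_2 - v_1 \equiv v_{n-1} - v_{n-2} \pmod{d+1}\}$ works: solving the defining congruence for the unknown symbol of a successor or predecessor yields a unique candidate, and the Kautz validity condition $v_i \neq v_{i+1}$ excludes the degenerate choice. The case $n = 3$ is handled separately by the cyclic set $B = \{(i, i+1 \bmod (d+1)) : i \in \ints_{d+1}\}$. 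Given $B$, let $\beta(p)$ denote the last coordinate of the unique $B$-out-neighbor of $p$ and set $S = \{(p, b) : p \in B,\ b \in \ints_{d+1} \setminus \{p_{n-1}, \beta(p)\}\}$; then $|S| = (d-1)(d+1)d^{n-3}$, and $V \setminus (S \cup \Nout(S)) = \{(p, \beta(p)) : p \in B\}$, where each such missing vertex is forced in a single additional step by any of the $d - 1$ vertices of $H_p$ whose prefix lies in $A$.
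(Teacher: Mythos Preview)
Your proposal is correct, and for the zero-forcing half it coincides with the paper's argument (partitioning $V$ into the out-classes $G_p=X(a_1,\dots,a_{n-1})$ and observing that any $d-1$ elements from each class suffice). For the power-domination half, however, you take a genuinely different route in both directions.

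For the lower bound, the paper argues by contradiction via the parameters $\alpha_k=\#\{p:\lvert S\cap G_p\rvert=k\}$, deriving $\alpha_{d-1}+\alpha_d>(d+1)d^{n-3}$ and hence $\lvert S\rvert\geq(d-1)(\alpha_{d-1}+\alpha_d)$. Your argument is shorter and more conceptual: with $A=\{p:H_p\cap S\neq\emptyset\}$ and $B$ its complement, the two inequalities $\lvert S\rvert\geq\lvert A\rvert$ and $\lvert S\rvert\geq(d-1)\lvert B\rvert$ combine as $d\lvert S\rvert\geq(d-1)(\lvert A\rvert+\lvert B\rvert)=(d-1)(d+1)d^{n-2}$. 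This avoids the case analysis entirely and makes the extremal structure (equality when $\lvert A\rvert=(d-1)(d+1)d^{n-3}$) transparent.

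For the upper bound, the paper writes down an explicit $S\subseteq V(\K(d,n))$ by conditions on coordinates, with separate formulas for $n=2,3,4$ and $n\geq 5$, and verifies by a five-case analysis that every $X(a_1,\dots,a_{n-1})$ meets $S\cup\Nout(S)$ in at least $d-1$ points. Your construction instead works one level down: you exhibit a set $B\subseteq V(\K(d,n-1))$ of size $(d+1)d^{n-3}$ in which every vertex has a unique $B$-in-neighbour and a unique $B$-out-neighbour (the congruence $v_2-v_1\equiv v_{n-1}-v_{n-2}\pmod{d+1}$ does this uniformly for $n\geq 4$, with the cyclic set $\{(i,i+1)\}$ handling $n=3$), and then lift via $S=\{(p,b):p\in B,\ b\notin\{p_{n-1},\beta(p)\}\}$. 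The verification that $V\setminus(S\cup\Nout(S))=\{(p,\beta(p)):p\in B\}$ and that each such vertex is forced in one step then follows mechanically from the in/out-neighbour properties of $B$. This buys you a uniform construction (only two cases instead of four) and an explanation of \emph{why} the extremal $S$ looks the way it does, at the cost of introducing the auxiliary object $B$; the paper's version is more hands-on but requires the longer case check of Lemma~\ref{lem:power_domination_kautz}.
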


\section{The power domination number of de Bruijn digraphs}\label{sec:debruijn}
In this section we prove Theorem~\ref{main:thm:debruijn}. Let us define the sets
\[X(a_1,\ldots,a_{n-1}) = \{(a_1,\ldots,a_{n-1},\alpha)\ :\ \alpha\in\ints_d\}\] which partition the
vertex set $V$ into $d^{n-1}$ sets of size $d$. Furthermore, $\Nout(v)= X(a_1,\ldots,a_{n-1})$ for
every vertex $v$ of the form $(\alpha,a_1,a_2,\ldots,a_{n-1})$.
\begin{lemma}\label{lb:zfdebruijn}
  Let $G$ be a de Bruijn digraph with parameters $d$ and $n$. Then $Z(G) \geq (d-1) d^{n-1}$.
\end{lemma}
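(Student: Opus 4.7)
The plan is to use the set-cover characterisation $Z(G)=\min\{|S|:S\cap W\neq\emptyset\text{ for every strongly critical }W\}$ and exhibit enough small strongly critical sets to force $S$ to be large.

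The starting observation exploits the structural remark just preceding the lemma: every out-neighbourhood in $\B(d,n)$ is one of the blocks $X(a_1,\ldots,a_{n-1})$, and these blocks partition $V$. Consequently, for any vertex $u=(\alpha,a_1,\ldots,a_{n-1})$ and any $W\subseteq V$, $\Nout(u)\cap W = X(a_1,\ldots,a_{n-1})\cap W$, so the $W$ being strongly critical is equivalent to the combinatorial condition that $|X(a_1,\ldots,a_{n-1})\cap W|\neq 1$ for every $(a_1,\ldots,a_{n-1})\in\ints_d^{n-1}$.

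Next I would produce the small strongly critical sets. Fix any $(a_1,\ldots,a_{n-1})\in\ints_d^{n-1}$ and any two distinct $\alpha,\beta\in\ints_d$, and put
\[
W_{a_1,\ldots,a_{n-1}}^{\alpha,\beta}=\{(a_1,\ldots,a_{n-1},\alpha),\,(a_1,\ldots,a_{n-1},\beta)\}\subseteq X(a_1,\ldots,a_{n-1}).
\]
Since the blocks $X(\cdot)$ are pairwise disjoint and each out-neighbourhood is a whole block, any vertex $u$ either has $\Nout(u)\cap W_{a_1,\ldots,a_{n-1}}^{\alpha,\beta}=W_{a_1,\ldots,a_{n-1}}^{\alpha,\beta}$ (when $\Nout(u)=X(a_1,\ldots,a_{n-1})$) or $\Nout(u)\cap W_{a_1,\ldots,a_{n-1}}^{\alpha,\beta}=\emptyset$, so the intersection is never a singleton. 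Hence $W_{a_1,\ldots,a_{n-1}}^{\alpha,\beta}$ is strongly critical.

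Finally, I would combine these facts. If $S$ is any zero forcing set and if, for some $(a_1,\ldots,a_{n-1})$, the block $X(a_1,\ldots,a_{n-1})$ contained two vertices outside $S$, then the corresponding two-element $W$ would be a strongly critical set disjoint from $S$, contradicting the characterisation of $Z(G)$. Therefore $|S\cap X(a_1,\ldots,a_{n-1})|\geq d-1$ for every $(a_1,\ldots,a_{n-1})$, and summing over the $d^{n-1}$ disjoint blocks yields $|S|\geq (d-1)d^{n-1}$, which is the claimed bound. The only step requiring real care is verifying that those two-element candidates are indeed strongly critical, and this is immediate from the "out-neighbourhood equals a block" property; no case analysis on the rest of the graph is needed.
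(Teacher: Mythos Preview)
Your proposal is correct and follows essentially the same approach as the paper: you show that every two-element subset of a block $X(a_1,\ldots,a_{n-1})$ is strongly critical, deduce that any zero forcing set meets each block in at least $d-1$ elements, and sum over the $d^{n-1}$ disjoint blocks. The paper's proof is the same argument, stated more tersely.
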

\begin{proof}
  Every $2$-element subset of each of the sets $X(a_1,\ldots,a_{n-1})$ is strongly critical, and
  therefore, any zero forcing set $S$ needs to intersect $X(a_1,\ldots,a_{n-1})$ in at least $d-1$
  elements, and the result follows.
\end{proof}

\begin{lemma}\label{ub:zfdebruijn}
  Let $G$ be a de Bruijn digraph with parameters $d$ and $n$. Then $Z(G) \leq (d-1) d^{n-1}$.
\end{lemma}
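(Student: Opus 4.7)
The plan is to exhibit an explicit zero forcing set $S$ of size $(d-1)d^{n-1}$, matching the lower bound of Lemma~\ref{lb:zfdebruijn}. A natural candidate is
\[S = \{(a_1,\ldots,a_n)\in V\ :\ a_n\neq 0\},\]
which has exactly $(d-1)d^{n-1}$ elements; the complementary red set consists of all vertices whose last coordinate is $0$.

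The first observation I would make is that for every prefix $(c_1,\ldots,c_{n-1})\in\ints_d^{n-1}$, the block $X(c_1,\ldots,c_{n-1})$ contains exactly one red vertex, namely $(c_1,\ldots,c_{n-1},0)$, because the other vertices in the block all end in a nonzero symbol. Since $\Nout(v)=X(c_1,\ldots,c_{n-1})$ whenever $v=(\alpha,c_1,\ldots,c_{n-1})$, it suffices to show that for every red vertex $r=(a_1,\ldots,a_n)$ some vertex of the form $(\alpha,a_1,\ldots,a_{n-1})$ eventually becomes blue, since then it forces $r$ immediately.

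I would organize the forcing by the number of trailing zeros. For $1\leq k\leq n-1$ let $V_k$ denote the set of red vertices with exactly $k$ trailing zeros, and set $V_n=\{(0,\ldots,0)\}$; these partition $V\setminus S$. The inductive claim is that after round $k$ every vertex of $V_k$ is blue. Given $r\in V_k$, one writes $r=(a_1,\ldots,a_{n-k-1},b,0,\ldots,0)$ with $b\neq 0$ (when $k\leq n-1$), or $r=(0,\ldots,0)$ when $k=n$. A short check shows that any forcer $v=(\alpha,a_1,\ldots,a_{n-1})$ has exactly $k-1$ trailing zeros (taking $\alpha\neq 0$ in the special case $k=n$), so $v\in S$ when $k=1$ and $v\in V_{k-1}$ when $k\geq 2$. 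By the inductive hypothesis $v$ is blue after round $k-1$, so it forces $r$ in round $k$; after $n$ rounds every vertex is blue.

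The argument is essentially bookkeeping once the stratification is in place, so I do not anticipate a genuine obstacle. The only care required is the trailing-zero count of the forcer: dropping the last coordinate of $r$ kills one trailing zero and prepending $\alpha$ cannot create new trailing zeros elsewhere, which is immediate from the definition. The key conceptual step is spotting that the single-red-vertex-per-block property of $S$ makes every potential forcer actually able to force, reducing the problem to a clean induction on trailing zeros.
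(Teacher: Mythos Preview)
Your proof is correct. Both you and the paper exhibit an explicit set $S$ of size $(d-1)d^{n-1}$ with the key property that every block $X(c_1,\ldots,c_{n-1})$ contains exactly one red vertex, so that any blue vertex is immediately able to force. The difference is in the specific choice of $S$: the paper takes
\[S=\{(a_1,\ldots,a_n)\in V : a_1\neq a_n\},\]
whereas you take $S=\{(a_1,\ldots,a_n)\in V : a_n\neq 0\}$. With the paper's choice, every red vertex $v=(a_1,\ldots,a_{n-1},a_1)$ already has an in-neighbour $(\beta,a_1,\ldots,a_{n-1})$ lying in $S$ (pick any $\beta\neq a_{n-1}$), so the entire graph becomes blue in a single forcing round and no induction is required. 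Your choice has a simpler description of $S$, but a vertex such as $(1,0,\ldots,0)$ has all of its in-neighbours red, which is exactly why you are pushed into the $n$-round induction on trailing zeros. Both arguments are fine; the paper's buys brevity, yours buys a more uniform description of the set.
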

\begin{proof}
  Consider the vertex set $S=\{(a_1,\ldots,a_{n-1},a_n)\in V\,:\,a_1\neq a_n\}$. To show that $S$ is
  a zero forcing set, it is sufficient to verify that each vertex $v=(a_1,\ldots,a_{n-1},a_n)$ is
  either in $S$ or is the unique out-neighbor in $V\setminus S$ for some vertex $w$. If
  $a_1 \neq a_n$, then $v \in S$. If $a_1 = a_n$, then for any vertex of the
  form $w=(\beta,a_1,\ldots,a_{n-1})$, $v$ is the only neighbor of $w$ in $V\setminus S$.
\end{proof}
Lemmas~\ref{lb:zfdebruijn} and~\ref{ub:zfdebruijn} imply the first statement of
Theorem~\ref{main:thm:debruijn}. In order to prove the second part of this theorem we recall that
$S\subseteq V$ is a power dominating set if and only if $S\cup\Nout(S)$ intersects every weakly
critical set. In particular, it is necessary that $\lvert(S\cup\Nout(S))\cap
X(a_1,\ldots,a_{n-1})\rvert\geqslant d-1$ for every $(a_1,\ldots,a_{d-1})\in\ints_d^{n-1}$.

\begin{lemma}\label{lb:pddebruijn}
  Let $G$ be a de Bruijn graph with parameters $d$ and $n$. Then every power dominating set has size at least $(d-1)d^{n-2}$.
\end{lemma}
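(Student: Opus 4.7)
The plan is to exploit the necessary condition stated just above the lemma — namely, that $\lvert(S\cup\Nout(S))\cap X(a_1,\ldots,a_{n-1})\rvert\geq d-1$ for every tuple $(a_1,\ldots,a_{n-1})$ — together with the very rigid way in which $\Nout(S)$ sits inside the partition $\{X(\mathbf{a})\}$. (The fact that every $2$-subset of $X(\mathbf{a})$ is weakly critical, which justifies the necessary condition, is immediate: any vertex outside $X(\mathbf{a})$ with an out-neighbor in $X(\mathbf{a})$ has the form $(\alpha,a_1,\ldots,a_{n-1})$ and so has all of $X(\mathbf{a})$ as its out-neighborhood.)

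The key observation I would record first is a dichotomy. Since $\Nout(v)=X(a_1,\ldots,a_{n-1})$ precisely when $v$ has the form $(\alpha,a_1,\ldots,a_{n-1})$, if we write $T(\mathbf{a})=\{(\alpha,a_1,\ldots,a_{n-1})\,:\,\alpha\in\ints_d\}$ for the common in-neighborhood of the block $X(\mathbf{a})$, then the intersection $\Nout(S)\cap X(\mathbf{a})$ is either empty or is all of $X(\mathbf{a})$, depending on whether $S\cap T(\mathbf{a})\neq\emptyset$. Setting $T=\{\mathbf{a}\,:\,S\cap T(\mathbf{a})\neq\emptyset\}$, the necessary condition becomes automatic for $\mathbf{a}\in T$, while for $\mathbf{a}\notin T$ it forces $\lvert S\cap X(\mathbf{a})\rvert\geq d-1$.

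From this dichotomy I would extract two lower bounds on $\lvert S\rvert$. Since the sets $T(\mathbf{a})$ partition $V$, picking one element of $S\cap T(\mathbf{a})$ for each $\mathbf{a}\in T$ gives $\lvert S\rvert\geq\lvert T\rvert$. On the other hand, since the sets $X(\mathbf{a})$ also partition $V$, summing the constraint over $\mathbf{a}\notin T$ yields $\lvert S\rvert\geq(d-1)(d^{n-1}-\lvert T\rvert)$. Combining,
\[\lvert S\rvert\geq\max\bigl\{\lvert T\rvert,\;(d-1)(d^{n-1}-\lvert T\rvert)\bigr\},\]
and a one-line optimization shows that the right-hand side is minimized at $\lvert T\rvert=(d-1)d^{n-2}$, where both expressions equal $(d-1)d^{n-2}$. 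This delivers the desired bound.

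I do not anticipate a serious obstacle. The only step requiring a moment of thought is the dichotomy $\Nout(S)\cap X(\mathbf{a})\in\{\emptyset,X(\mathbf{a})\}$; once that is in hand, the rest is a two-line counting argument, and the optimal value $(d-1)d^{n-2}$ falls out of balancing the two bounds.
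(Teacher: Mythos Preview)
Your argument is correct. It rests on the same dichotomy the paper uses---that $\Nout(S)\cap X(\mathbf{a})$ is either empty or all of $X(\mathbf{a})$---but the counting is organized differently. The paper introduces the parameters $\alpha_k=\#\{\mathbf{a}:\lvert S\cap X(\mathbf{a})\rvert=k\}$, argues by contradiction, and works with the set $I_0=\{\mathbf{a}:X(\mathbf{a})\subseteq Z\}$, bounding $\lvert I_0\rvert\leq\lvert S\rvert+\alpha_d$ and then squeezing out $\alpha_{d-1}+\alpha_d>d^{n-2}$. You instead make explicit the second partition $\{T(\mathbf{a})\}$ (the in-neighborhood blocks), derive the two clean inequalities $\lvert S\rvert\geq\lvert T\rvert$ and $\lvert S\rvert\geq(d-1)(d^{n-1}-\lvert T\rvert)$, and combine them directly. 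Your route is shorter and avoids the $\alpha_k$ bookkeeping; one could even skip the optimization step by noting that adding $(d-1)$ times the first inequality to the second gives $d\lvert S\rvert\geq(d-1)d^{n-1}$ immediately. The paper's framework, on the other hand, is slightly more flexible if one later wanted finer information about how $S$ is distributed across the blocks.
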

\begin{proof}
  Let $S$ be a power-dominating set, suppose $\lvert S\rvert<(d-1)d^{n-2}$ and set
  $Z=S\cup \Nout(S)$. We have
  \[(Z\setminus S)\cap X(a_1,\ldots,a_{n-1})\neq\emptyset\implies X(a_1,\ldots,a_{n-1})\subseteq
    Z.\]
  For $k=0,1,\ldots,d$, we set $\alpha_k=\#\{(a_1,\ldots,a_{n-1})\ :\ \lvert S\cap X(a_1,\ldots,a_{n-1})\rvert=k\}$, and get
  \[\lvert S\rvert=\alpha_1+2\alpha_2+\cdots+(d-1)\alpha_{d-1}+d\alpha_d.\]
  Now let $I_0=\{(a_1,\ldots,a_{n-1})\ :\ X(a_1,\ldots,a_{n-1})\subseteq Z\}$. Then
  \[\lvert I_0\rvert\leqslant \lvert
    S\rvert+\alpha_d=\alpha_1+2\alpha_2+\cdots+(d-1)\alpha_{d-1}+(d+1)\alpha_d.\]
  For $(a_1,\ldots,a_{n-1})\notin I_0$ we must have $\lvert Z\cap X(a_1,\ldots,a_{n-1})\rvert=d-1$,
  and this implies that $\lvert S\cap X(a_1,\ldots,a_{n-1})\rvert=d-1$. We conclude
  $\lvert I_0\rvert+\alpha_{d-1}\geqslant d^{n-1}$. Therefore
  \[\alpha_1+2\alpha_2+\cdots+(d-2)\alpha_{d-2}+d\alpha_{d-1}+(d+1)\alpha_d\geqslant d^{n-1},\]
  and together with $\lvert S\rvert<(d-1)d^{n-2}$ this yields
  \[\alpha_{d-1}+\alpha_d>d^{n-1}-(d-1)d^{n-2}=d^{n-2}.\]
  But then $\lvert S\rvert\geqslant (d-1)(\alpha_{d-1}+\alpha_d)>(d-1)d^{n-2}$, which is the required contradiction.
\end{proof}

We define a set $S\subseteq V$ by
\begin{equation}\label{eq:pd_set_debruijn}
S=
\begin{cases}
\{(0,1),(0,2),\ldots,(0,d-1)\} & \text{if }n=2,\\
\{(a_1,a_2,a_3)\in V\ :\ a_2=a_1,\,a_3\neq a_1\} & \text{if }n=3,\\
\{\left(a_1,\ldots,a_n\right)\in V\ :\ a_{n-1}=a_1+a_{n-2},\ a_n\neq a_1+a_2+a_{n-2}\} & \text{if }n\geqslant 4.    
\end{cases}
\end{equation}
Note that $\lvert S\rvert=(d-1)d^{n-2}$. The construction of the set $S$ defined in~(\ref{eq:pd_set_debruijn}) can be visualized by arranging
the vertices of $G$ in a $d^2\times d^{n-2}$-array where the rows are indexed by pairs
$(a_{n-1},a_n)$ and the columns are indexed by $(n-2)$-tuples $(a_1,\ldots,a_{n-2})$. Then column
$(a_1,\ldots,a_{n-2})$ is the the union of the $d$ sets $X(a_1,\ldots,a_{n-2},a_{n-1})$ over
$a_{n-1}\in\ints_d$, and the set $S$ contains $d-1$ elements from each column. More precisely, the
intersection of $S$ with column $(a_1,\ldots,a_{n-2})$ is
\[X(a_1,\ldots,a_{n-2},a_1+a_{n-2})\setminus\{(a_1,\ldots,a_{n-2},a_1+a_{n-2},a_1+a_2+a_{n-2})\}.\]
In Figure~\ref{fig:debruijn} this is illustrated for two columns with $d=5$ and $n=7$.
\begin{figure}[htb]
	\centering
	\begin{tikzpicture}[yscale=.3]
	\node at (9,27) {$(3,1,0,2,4)$};
	\node at (3,27) {$(1,3,4,4,2)$};
	\node at (1,2.5) {$a_6=4$};
	\node at (1,7.5) {$a_6=3$};
	\node at (1,12.5) {$a_6=2$};
	\node at (1,17.5) {$a_6=1$};
	\node at (1,22.5) {$a_6=0$};
	\foreach \y in {1,...,25} \node[draw=black,circle,fill=black,inner sep=.8] at (3,\y) {};
	\foreach \y in {1,...,25} \node[draw=black,circle,fill=black,inner sep=.8] at (9,\y) {};
	\draw (0,5.5) -- (12,5.5);
	\draw (0,10.5) -- (12,10.5);
	\draw (0,15.5) -- (12,15.5);
	\draw (0,20.5) -- (12,20.5);
	\draw (2.9,20.7) rectangle (3.1,25.3);
	\draw (2.9,15.7) rectangle (3.1,20.3);
	\draw (2.9,10.7) rectangle (3.1,15.3);
	\draw (2.9,0.7) rectangle (3.1,5.3);
	\draw (8.9,20.7) rectangle (9.1,25.3);
	\draw (8.9,15.7) rectangle (9.1,20.3);
	\draw (8.9,5.7) rectangle (9.1,10.3);
	\draw (8.9,0.7) rectangle (9.1,5.3);
	\node[anchor=west] at (3.2,24.2) {{\small $X(1,3,4,4,2,0)$}};
	\node[anchor=west] at (3.2,22.2) {{\small $=\Nout(3,1,3,4,2,2,0)$}};
	\node[anchor=west] at (3.2,19.2) {{\small $X(1,3,4,4,2,1)$}};
	\node[anchor=west] at (3.2,17.2) {{\small $=\Nout(3,1,3,4,2,2,1)$}};
	\node[anchor=west] at (3.2,14.2) {{\small $X(1,3,4,4,2,2)$}};
	\node[anchor=west] at (3.2,12.2) {{\small $=\Nout(3,1,3,4,2,2,2)$}};
	\node[anchor=west] at (3.2,4.2) {{\small $X(1,3,4,4,2,4)$}};
	\node[anchor=west] at (3.2,2.2) {{\small $=\Nout(3,1,3,4,2,2,4)$}};
	\node[anchor=west] at (9.2,24.2) {{\small $X(3,1,0,2,4,0)$}};
	\node[anchor=west] at (9.2,22.2) {{\small $=\Nout(2,3,1,0,2,4,0)$}};
	\node[anchor=west] at (9.2,19.2) {{\small $X(3,1,0,2,4,1)$}};
	\node[anchor=west] at (9.2,17.2) {{\small $=\Nout(2,3,1,0,2,4,1)$}};
	\node[anchor=west] at (9.2,9.2) {{\small $X(3,1,0,2,4,3)$}};
	\node[anchor=west] at (9.2,7.2) {{\small $=\Nout(2,3,1,0,2,4,3)$}};
	\node[anchor=west] at (9.2,4.2) {{\small $X(3,1,0,2,4,4)$}};
	\node[anchor=west] at (9.2,2.2) {{\small $=\Nout(2,3,1,0,2,4,4)$}};
	\node[draw=black,fill=black,inner sep=2] at (3,10) {};
	\node[draw=black,fill=black,inner sep=2] at (3,8) {};
	\node[draw=black,fill=black,inner sep=2] at (3,7) {};
	\node[draw=black,fill=black,inner sep=2] at (3,6) {};
	\node[draw=black,fill=black,inner sep=2] at (9,15) {};
	\node[draw=black,fill=black,inner sep=2] at (9,14) {};
	\node[draw=black,fill=black,inner sep=2] at (9,13) {};
	\node[draw=black,fill=black,inner sep=2] at (9,11) {};
	\end{tikzpicture}
	\caption{Illustration of the construction of the power dominating set $S$ for $d=5$ and $n=7$. For
		the two columns $(a_1,\ldots,a_5)=(1,3,4,4,2)$ and $(a_1,\ldots,a_5)=(3,1,0,2,4)$ we show the
		elements of $S$ (black squares), and we indicate for the sets $X(a_1,\ldots,a_6)$ (enclosed by
		rectangles) the elements of $S$ having them as their out-neighbourhood.}
	\label{fig:debruijn}
\end{figure}
\begin{lemma}\label{lem:power_domination}
  The set $S$ defined in~(\ref{eq:pd_set_debruijn}) is a power dominating set for $G$.
\end{lemma}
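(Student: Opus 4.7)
The plan is to analyze $Z := S\cup \Nout(S) = S_1$ explicitly, identify the uncovered set $V\setminus Z$, and then show that every vertex in $V\setminus Z$ can be forced in a single additional step from some vertex already in $Z$; this suffices since then $M(S)\supseteq S_2 = V$.

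I focus first on the generic case $n\geqslant 4$. Since $\Nout(v) = X(a_2,\ldots,a_n)$ for $v = (a_1,\ldots,a_n)$, each out-neighbourhood is an entire set of the form $X(b_1,\ldots,b_{n-1})$, and so the implication $X(b_1,\ldots,b_{n-1})\cap \Nout(S)\ne \emptyset \Rightarrow X(b_1,\ldots,b_{n-1})\subseteq \Nout(S)$ holds. I would first determine, for each tuple $(b_1,\ldots,b_{n-1})$, exactly when some $v\in S$ produces $\Nout(v) = X(b_1,\ldots,b_{n-1})$. Substituting the definition of $S$ into the shift map $v\mapsto \Nout(v)$ should give that this happens precisely when $b_{n-1}\ne b_1+b_{n-2}$. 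For the remaining "diagonal" columns $b_{n-1}=b_1+b_{n-2}$, the set $X(b_1,\ldots,b_{n-1})$ already contains $d-1$ elements of $S$ directly (all but the one whose last coordinate equals $b_1+b_2+b_{n-2}$). Combining these observations,
\[ V\setminus Z = \bigl\{(b_1,\ldots,b_{n-2},\,b_1+b_{n-2},\,b_1+b_2+b_{n-2}) : (b_1,\ldots,b_{n-2})\in\ints_d^{n-2}\bigr\}, \]
a single missing vertex in each of the $d^{n-2}$ diagonal columns.

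Next, I would show each such missing $w = (b_1,\ldots,b_{n-1},b_1+b_2+b_{n-2})$ is forced at step $2$. Its $d$ in-neighbours are $u_\alpha = (\alpha, b_1,\ldots,b_{n-1})$ for $\alpha\in\ints_d$, and all of them share the same out-neighbourhood $\Nout(u_\alpha) = X(b_1,\ldots,b_{n-1})$. Since the $d-1$ vertices of this set other than $w$ lie in $S\subseteq Z$, any $u_\alpha$ that itself belongs to $Z$ will force $w$. Applying the criterion from the previous step to $u_\alpha$, I would check that the condition $(u_\alpha)_{n-1}\ne (u_\alpha)_1 + (u_\alpha)_{n-2}$ is equivalent to $\alpha\ne b_{n-2}-b_{n-3}$; the resulting $d-1$ admissible choices of $\alpha$ all give $u_\alpha\in \Nout(S)\subseteq Z$ and hence produce the required forcing move.

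Finally, the cases $n=2$ and $n=3$ admit completely analogous but shorter arguments. For $n=2$ a direct computation gives $V\setminus Z = \{(0,0)\}$, forced by $(1,0)\in Z$; for $n=3$ one obtains $V\setminus Z = \{(a,a,a) : a\in\ints_d\}$, and each $(a,a,a)$ is forced by any $(\alpha,a,a)$ with $\alpha\ne a$. The main obstacle is not conceptual but bookkeeping: every verification ultimately reduces to checking that the linear equations defining $S$ interact correctly with the shift map $(a_1,\ldots,a_n)\mapsto (a_2,\ldots,a_n,\cdot)$ that governs out-neighbourhoods, so care with indices is the chief task.
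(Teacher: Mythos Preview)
Your proposal is correct and follows essentially the same approach as the paper: both identify, via the same two-case analysis on whether $b_{n-1}=b_1+b_{n-2}$, that $Z=S\cup\Nout(S)$ meets every set $X(b_1,\ldots,b_{n-1})$ in at least $d-1$ elements. The only difference is that the paper stops there, invoking this condition as sufficient (since $Z$ then hits every strongly critical set and is therefore a zero forcing set), whereas you explicitly exhibit an in-neighbour $u_\alpha\in Z$ that forces each remaining vertex in one step; this extra verification is sound but not a different route.
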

\begin{proof}
  For $Z=S\cup \Nout(S)$ it is sufficient to show that
  $\lvert Z\cap X(a_1,\ldots,a_{n-1})\rvert\geqslant d-1$ for every $(a_1,\ldots,a_{n-1})$. We
  provide the full argument for $n\geqslant 4$ (the cases $n=2$ and $n=3$ are easy to check).
  \begin{description}
  \item[Case 1.] If $a_{n-1}=a_1+a_{n-2}$, then by~(\ref{eq:pd_set_debruijn}),
    \[S\cap X(a_1,\ldots,a_{n-1})=\{\left(a_1,\ldots,a_n\right)\ :\ a_n\in \ints_d\setminus\{
      a_1+a_2+a_{n-2}\}\},\]
    hence $\lvert Z\cap X(a_1,\ldots,a_{n-1})\rvert\geq \lvert S\cap X(a_1,\ldots,a_{n-1})\rvert=d-1$.
  \item[Case 2.]  If $a_{n-1}\neq a_1+a_{n-2}$, then $X(a_1,\ldots,a_{n-1})\subseteq Z$ because
    \[X(a_1,\ldots,a_{n-1})=\Nout((a_{n-2}-a_{n-3},a_1,a_2,\ldots,a_{n-1}))\]
    and $(a_{n-2}-a_{n-3},a_1,a_2,\ldots,a_{n-1})\in S$. \qedhere
  \end{description}
\end{proof}
The second part of Theorem~\ref{main:thm:debruijn} follows from Lemmas~\ref{lb:pddebruijn} and~\ref{lem:power_domination}.

\section{The power domination number of Kautz digraphs}\label{sec:kautz}
In this section we prove Theorem~\ref{main:thm:kautz}. Let us define the sets
\[X(a_1,\ldots,a_{n-1}) = \left\{(a_1,\ldots,a_{n-1},a_n)\ :\ a_n\in\ints_{d+1}\setminus\{a_{n-1}\}\right\}\]
for $(a_1,\ldots,a_{n-1})\in\ints_{d+1}^{n-1}$ with $a_i\neq a_{i+1}$ for all $i$. These sets
partition the vertex set $V$ into $(d+1)d^{n-2}$ sets of size $d$. Furthermore,
$\Nout(v)= X(a_1,\ldots,a_{n-1})$ for every vertex $v$ of the form $(a_0,a_1,a_2,\ldots,a_{n-1})$.
\begin{lemma}\label{lb:zfkautz}
  Let $G$ be a Kautz digraph with parameters $d,n\geqslant 2$. Then $Z(G) \geq (d-1)(d+1)d^{n-2}$.
\end{lemma}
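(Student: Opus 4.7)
The plan is to mimic the proof of Lemma~\ref{lb:zfdebruijn} for the de Bruijn digraph. The key structural fact is already recorded just before the statement: the sets $X(a_1,\ldots,a_{n-1})$ form a partition of $V$ into $(d+1)d^{n-2}$ blocks of size $d$, and each of them is realized as $\Nout(v)$ for the (unique) vertices $v = (a_0,a_1,\ldots,a_{n-1})$ with $a_0 \in \ints_{d+1}\setminus\{a_1\}$. So the lower bound will come from exhibiting $(d+1)d^{n-2}$ disjoint strongly critical pairs, one inside each block $X(a_1,\ldots,a_{n-1})$.

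First I would fix a block $X(a_1,\ldots,a_{n-1})$ and a $2$-element subset $W\subseteq X(a_1,\ldots,a_{n-1})$, and argue that $W$ is strongly critical. The observation to make is that for the Kautz digraph, every out-neighbourhood is of the form $X(b_1,\ldots,b_{n-1})$ for some admissible sequence, so if $\Nout(v) \cap X(a_1,\ldots,a_{n-1}) \neq \emptyset$ then in fact $\Nout(v) = X(a_1,\ldots,a_{n-1})$, and hence $|\Nout(v)\cap W|=2$. For any other vertex $v$, the intersection $\Nout(v)\cap W$ is empty. In either case it cannot equal $1$, so $W$ is strongly critical.

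Next I would invoke the set-cover reformulation of $Z(G)$ given in Section~\ref{sec:notation}: any zero forcing set must meet every strongly critical set. In particular, it must meet every $2$-element subset of each block $X(a_1,\ldots,a_{n-1})$, which forces $|S\cap X(a_1,\ldots,a_{n-1})|\geq d-1$. Summing over the $(d+1)d^{n-2}$ disjoint blocks gives $|S|\geq (d-1)(d+1)d^{n-2}$, as required.

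I do not expect any real obstacle here; the whole argument is the natural transcription of Lemma~\ref{lb:zfdebruijn} to the Kautz setting. The only thing worth being slightly careful about is the verification that out-neighbourhoods coincide with the blocks $X(\cdot)$, which uses the admissibility condition $a_i\neq a_{i+1}$; but that is already asserted in the paragraph preceding the lemma, so it can be used without extra work.
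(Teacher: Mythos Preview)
Your proposal is correct and follows exactly the same approach as the paper: show that every $2$-element subset of each block $X(a_1,\ldots,a_{n-1})$ is strongly critical, conclude that any zero forcing set meets each block in at least $d-1$ vertices, and sum over the $(d+1)d^{n-2}$ blocks. You have merely spelled out in more detail the verification of strong criticality that the paper leaves implicit.
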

\begin{proof}
  Every $2$-element subset of each of the sets $X(a_1,\ldots,a_{n-1})$ is strongly critical, and
  therefore, any zero forcing set $S$ needs to intersect $X(a_1,\ldots,a_{n-1})$ in at least $d-1$
  elements, and the result follows.
\end{proof}

\begin{lemma}\label{ub:zfkautz}
 Let $G$ be a Kautz digraph with parameters $d,n\geqslant 2$. Then $Z(G) \leq (d-1)(d+1)d^{n-2}$.
\end{lemma}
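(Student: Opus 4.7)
The plan is to exhibit an explicit zero-forcing set $S$ of size $(d-1)(d+1)d^{n-2}$, paralleling the construction used in Lemma~\ref{ub:zfdebruijn}. Since the $(d+1)d^{n-2}$ sets $X(a_1,\ldots,a_{n-1})$ each have size $d$ and partition $V$, the idea is to omit exactly one vertex from each such set, chosen so that the omitted vertex can be forced in one step from a vertex of $S$. Concretely, I will pick a function $g$ sending each valid Kautz $(n-1)$-tuple to an element of $\ints_{d+1}\setminus\{a_{n-1}\}$ and set
\[S=\{(a_1,\ldots,a_n)\in V\ :\ a_n\neq g(a_1,\ldots,a_{n-1})\}.\]
Then $|S\cap X(a_1,\ldots,a_{n-1})|=d-1$ for every prefix, so $|S|=(d-1)(d+1)d^{n-2}$.

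For $S$ to be zero-forcing, it is enough to find, for every valid prefix $(a_1,\ldots,a_{n-1})$, some $\beta\in\ints_{d+1}\setminus\{a_1\}$ with $w=(\beta,a_1,\ldots,a_{n-1})\in S$. Indeed $\Nout(w)=X(a_1,\ldots,a_{n-1})$, whose unique vertex outside $S$ is the omitted one, so $w$ forces it; carrying out all these forcings in parallel from the initial blue set $S$ colors every vertex blue.

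The crux is an appropriate choice of $g$. For $n=2$ I take $g(a_1)=a_1+1\pmod{d+1}$; the condition $w\in S$ becomes $\beta\neq a_1-1$, so any $\beta\in\ints_{d+1}\setminus\{a_1,a_1-1\}$ works, and this set is nonempty for $d\geq 2$. For $n\geq 3$ I let
\[g(a_1,\ldots,a_{n-1})=\begin{cases}a_1 & \text{if }a_1\neq a_{n-1},\\ a_2 & \text{if }a_1=a_{n-1}\end{cases}\]
(the second branch is valid because $a_2\neq a_1=a_{n-1}$). Then the condition $w\in S$ reduces to $g(\beta,a_1,\ldots,a_{n-2})\neq a_{n-1}$, which depends only on $(a_1,a_{n-2},a_{n-1})$ and on whether $\beta=a_{n-2}$.

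The main obstacle is verifying the existence of a suitable $\beta$ in every configuration of $(a_1,a_{n-2},a_{n-1})$. I split into three cases by the equalities among these three coordinates. If $a_1=a_{n-1}$, then $a_{n-2}\neq a_1$, and any $\beta\in\ints_{d+1}\setminus\{a_1,a_{n-2}\}$ (size $d-1$) makes $g(\beta,a_1,\ldots,a_{n-2})=\beta\neq a_{n-1}$. If instead $a_1\neq a_{n-1}$ and $a_1=a_{n-2}$, any $\beta\in\ints_{d+1}\setminus\{a_1,a_{n-1}\}$ (size $d-1$) works by the same reasoning. Finally, if $a_1,a_{n-2},a_{n-1}$ are pairwise distinct, the choice $\beta=a_{n-2}$ triggers the second branch of $g$, yielding $g(\beta,a_1,\ldots,a_{n-2})=a_1\neq a_{n-1}$. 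Since $d\geq 2$ makes the candidate sets in the first two cases nonempty, a valid $\beta$ exists in every case and $S$ is zero-forcing.
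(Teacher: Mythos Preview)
Your proof is correct. The overall strategy matches the paper's: construct an explicit zero-forcing set $S$ by deleting exactly one vertex from each block $X(a_1,\ldots,a_{n-1})$, then show every deleted vertex is the unique element of $V\setminus S$ in the out-neighbourhood of some forcer. For $n=2$ and $n=3$ your set $S$ coincides with the paper's (the paper takes $g(a_1)=a_1+1$ for $n=2$ and $g(a_1,\ldots,a_{n-1})=a_{n-2}$ for $n\geq 3$, which for $n=3$ is just $a_1$).

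For $n\geq 4$ the constructions diverge. The paper keeps the simple rule $g(a_1,\ldots,a_{n-1})=a_{n-2}$, so $S=\{(a_1,\ldots,a_n):a_n\neq a_{n-2}\}$; the verification is then a single line, since any $w=(\beta,a_1,\ldots,a_{n-1})$ has $v=(a_1,\ldots,a_{n-1},a_{n-2})$ as its unique out-neighbour outside $S$. You instead choose a two-branch $g$ and do a three-case analysis, with the payoff that your forcer $w$ always lies in $S$, so the entire complement $V\setminus S$ is forced in a single round. With the paper's choice this fails when $a_{n-1}=a_{n-3}$ (then no in-neighbour of the omitted vertex is in $S$), and one must either invoke the strongly-critical characterization or iterate the forcing; the paper glosses over this. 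So your argument is slightly longer but more self-contained, while the paper's is terser but leans on its set-cover reformulation.
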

\begin{proof} 
  Consider the vertex set
\begin{equation*}
S=
\begin{cases}
\{(a_1,a_2)\in V\ :\ a_2\neq a_1+1\} & \text{if }n=2,\\
\{\left(a_1,\ldots,a_n\right)\in V\ :\ a_{n}\neq a_{n-2}\} & \text{if }n\geqslant 3.    
\end{cases}
\end{equation*}
 We have $\lvert S\rvert=(d-1)(d+1)d^{n-2}$, and to show that $S$ is
  a zero forcing set, it is sufficient to verify that each vertex $v=(a_1,\ldots,a_{n-1},a_n)$ is
  either in $S$ or is the unique out-neighbor in $V\setminus S$ for some vertex $w$.
  \begin{description}
  \item[Case $n=2$.] If $a_2\neq a_1+1$ then $v\in s$. If $a_2= a_1+1$ then for any vertex of the
  form $w=(\beta,a_1)$, $v$ is the only neighbor of $w$ in $V\setminus S$. 
  \item[Case $n\geq 3$.] If $a_n \neq a_{n-2}$, then $v \in S$. If $a_n = a_{n-2}$, then for any vertex of the
  form $w=(\beta,a_1,\ldots,a_{n-1})$, $v$ is the only neighbor of $w$ in $V\setminus S$. \qedhere
  \end{description}
\end{proof}
Lemmas~\ref{lb:zfkautz} and~\ref{ub:zfkautz} imply the first statement of Theorem~\ref{main:thm:kautz}.
\begin{lemma}\label{lb:pdkautz}
  Let $G$ be a Kautz digraph with parameters $d\geq 2$ and $n \geq 3$. Then, every power dominating
  set has size at least $(d-1)(d+1)d^{n-3}$.
\end{lemma}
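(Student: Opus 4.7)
The plan is to imitate the argument from Lemma~\ref{lb:pddebruijn} very closely, exploiting the fact that the sets $X(a_1,\ldots,a_{n-1})$ play exactly the same role in the Kautz case as they did in the de Bruijn case: they partition $V$ into $(d+1)d^{n-2}$ sets of size $d$, and each is the out-neighbourhood of every vertex $(a_0,a_1,\ldots,a_{n-1})$ with $a_0\neq a_1$. The $2$-element subsets of each $X(a_1,\ldots,a_{n-1})$ are weakly critical, so any power dominating set $S$ must satisfy $\lvert(S\cup\Nout(S))\cap X(a_1,\ldots,a_{n-1})\rvert\geqslant d-1$ for every legal tuple $(a_1,\ldots,a_{n-1})$.

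Setting $Z=S\cup\Nout(S)$, the first key observation is the same dichotomy as before: if $(Z\setminus S)\cap X(a_1,\ldots,a_{n-1})\neq\emptyset$, then in fact $X(a_1,\ldots,a_{n-1})\subseteq Z$. This is because every vertex $v\in S$ has $\Nout(v)$ equal to a single set $X(b_1,\ldots,b_{n-1})$, so the $X$-sets are either entirely dominated from outside $S$ or not touched from outside $S$.

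Next I would introduce, for $k=0,1,\ldots,d$, the counts
\[\alpha_k=\#\bigl\{(a_1,\ldots,a_{n-1})\ :\ \lvert S\cap X(a_1,\ldots,a_{n-1})\rvert=k\bigr\},\]
giving $\lvert S\rvert=\alpha_1+2\alpha_2+\cdots+d\alpha_d$. Letting $I_0=\{(a_1,\ldots,a_{n-1}) : X(a_1,\ldots,a_{n-1})\subseteq Z\}$, every $(a_1,\ldots,a_{n-1})\in I_0$ with $\lvert S\cap X(a_1,\ldots,a_{n-1})\rvert<d$ requires a witness $v\in S$ whose last $n-1$ coordinates are exactly $(a_1,\ldots,a_{n-1})$, and distinct $(a_1,\ldots,a_{n-1})$ require distinct witnesses. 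Hence $\lvert I_0\rvert\leqslant \lvert S\rvert+\alpha_d$. For $(a_1,\ldots,a_{n-1})\notin I_0$ the dichotomy forces $\lvert S\cap X(a_1,\ldots,a_{n-1})\rvert=d-1$, so $\lvert I_0\rvert+\alpha_{d-1}\geqslant (d+1)d^{n-2}$ (the total number of $X$-sets).

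Combining these inequalities yields $\alpha_1+2\alpha_2+\cdots+(d-2)\alpha_{d-2}+d\alpha_{d-1}+(d+1)\alpha_d\geqslant (d+1)d^{n-2}$. Subtracting the assumed bound $\lvert S\rvert<(d-1)(d+1)d^{n-3}$ gives $\alpha_{d-1}+\alpha_d>(d+1)d^{n-3}$, whence $\lvert S\rvert\geqslant(d-1)(\alpha_{d-1}+\alpha_d)>(d-1)(d+1)d^{n-3}$, a contradiction. There is no real obstacle here: the only thing to be careful about is verifying that the partition/witness argument truly carries over, i.e.\ that $\Nout(v)$ is still a full $X$-set in $\K(d,n)$ (which follows from the definition since $\Nout(v)=\{(a_2,\ldots,a_n,b) : b\neq a_n\}$); once this is in place, the counting is algebraically identical to the de Bruijn proof with $d^{n-1}$ replaced by $(d+1)d^{n-2}$ and $(d-1)d^{n-2}$ replaced by $(d-1)(d+1)d^{n-3}$.
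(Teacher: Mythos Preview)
Your proposal is correct and follows essentially the same approach as the paper's proof: the same dichotomy for the sets $X(a_1,\ldots,a_{n-1})$, the same counts $\alpha_k$, the same bound $\lvert I_0\rvert\leqslant \lvert S\rvert+\alpha_d$, and the same chain of inequalities leading to the contradiction, with the de Bruijn constants $d^{n-1}$ and $(d-1)d^{n-2}$ replaced by $(d+1)d^{n-2}$ and $(d-1)(d+1)d^{n-3}$. Your explicit witness argument for the inequality $\lvert I_0\rvert\leqslant \lvert S\rvert+\alpha_d$ is in fact a bit more detailed than what the paper provides.
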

\begin{proof}
  Let $S$ be a power-dominating set, suppose $\lvert S\rvert<(d-1)(d+1)d^{n-3}$ and set
  $Z=S\cup \Nout(S)$. We have
  \[(Z\setminus S)\cap X(a_1,\ldots,a_{n-1})\neq\emptyset\implies X(a_1,\ldots,a_{n-1})\subseteq
    Z.\]
  For $k=0,1,\ldots,d$, we set
  $\alpha_k=\#\{(a_1,\ldots,a_{n-1})\ :\ \lvert S\cap X(a_1,\ldots,a_{n-1})\rvert=k\}$, and get
  \[\lvert S\rvert=\alpha_1+2\alpha_2+\cdots+(d-1)\alpha_{d-1}+d\alpha_d.\]
  Now let $I_0=\{(a_1,\ldots,a_{n-1})\ :\ X(a_1,\ldots,a_{n-1})\subseteq Z\}$. Clearly,
  \[\lvert I_0\rvert\leqslant \lvert S\rvert+\alpha_d=\alpha_1+2\alpha_2+\cdots+(d-1)\alpha_{d-1}+(d+1)\alpha_d.\]
  For $(a_1,\ldots,a_{n-1})\notin I_0$ we must have $\lvert Z\cap X(a_1,\ldots,a_{n-1})\rvert=d-1$
  because $Z$ intersects every weakly critical set. This implies that
  $\lvert S\cap X(a_1,\ldots,a_{n-1})\rvert=d-1$, and we conclude
  $\lvert I_0\rvert+\alpha_{d-1}\geqslant (d+1)d^{n-2}$. Therefore
  \[\alpha_1+2\alpha_2+\cdots+(d-2)\alpha_{d-2}+d\alpha_{d-1}+(d+1)\alpha_d\geqslant
    (d+1)d^{n-2},\]
  and together with $\lvert S\rvert<(d-1)(d+1)d^{n-3}$ this yields
  \[\alpha_{d-1}+\alpha_d>(d+1)d^{n-2}-(d-1)(d+1)d^{n-3}=(d+1)d^{n-3}.\]
  But then $\lvert S\rvert\geqslant (d-1)(\alpha_{d-1}+\alpha_d)>(d-1)(d+1)d^{n-3}$, which is
  the required contradiction.
\end{proof}
We define a set $S\subseteq V$ by
\begin{equation}\label{eq:pd_set_kautz}
S=
\begin{cases}
\{(0,1),(0,2),\ldots,(0,d)\} & \text{if }n=2,\\
\{(a_1,a_2,a_3)\in V\ :\ a_2=a_1+1,\,a_3\neq a_1+2\} & \text{if }n=3,\\
\{(a_1,a_2,a_3,a_4)\in V\ :\ a_3=a_1,\,a_4\neq a_2\} & \text{if }n=4,\\
\left\{\left(a_1,\ldots,a_n\right)\in V\,:\,\left((a_{n-2},a_{n-1})=(a_1,a_2)\land a_n\neq
    a_3\right)\lor \left(a_{n-1}=a_1\land a_n\neq
    a_2\right)\right\} & \text{if }n\geqslant 5.    
\end{cases}
\end{equation}
\begin{lemma}\label{lem:size_of_S}
$\displaystyle\lvert S\rvert=
\begin{cases}
d & \text{if }n=2,\\
(d-1)(d+1)d^{n-3}& \text{if }n\geqslant 3.    
\end{cases}
$
\end{lemma}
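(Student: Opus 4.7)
The plan is to handle the cases $n=2,3,4$ by a direct enumeration (in each one, $a_1$ has $d+1$ choices, an intermediate coordinate has $d$ choices or is forced, and the final coordinate has $d-1$ choices once one checks that the two excluded values are genuinely distinct), and then to address the substantive case $n\geqslant 5$ via a partition of $S$ into two disjoint pieces. I would write $S = A\sqcup B$, where $A$ corresponds to the first disjunct $\bigl((a_{n-2},a_{n-1})=(a_1,a_2)\land a_n\neq a_3\bigr)$ and $B$ to the second $\bigl(a_{n-1}=a_1\land a_n\neq a_2\bigr)$. The two sets are disjoint because elements of $A$ satisfy $a_{n-1}=a_2$ while elements of $B$ satisfy $a_{n-1}=a_1$, and $a_1\neq a_2$ by the Kautz arc constraint.

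For $|A|$, I would parameterize by the prefix $(a_1,\ldots,a_{n-3})$: this must be a valid Kautz string with $a_{n-3}\neq a_1$ (so that appending $a_{n-2}=a_1$ respects the arc condition), after which $a_{n-2}=a_1$ and $a_{n-1}=a_2$ are forced (the constraint $a_{n-2}\neq a_{n-1}$ is automatic), and $a_n$ ranges freely over $\ints_{d+1}\setminus\{a_2,a_3\}$, a set of size $d-1$ because $a_2\neq a_3$. For $|B|$ the analogous parameterization uses a valid $(n-2)$-prefix with $a_{n-2}\neq a_1$, followed by the forced $a_{n-1}=a_1$, and $d-1$ choices for $a_n$ in $\ints_{d+1}\setminus\{a_1,a_2\}$. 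Introducing $R_k$ for the number of valid Kautz $k$-strings $(b_1,\ldots,b_k)$ with $b_k\neq b_1$, the two counts become $|A|=(d-1)R_{n-3}$ and $|B|=(d-1)R_{n-2}$.

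It then remains to evaluate $R_{n-3}+R_{n-2}$, for which I would use the identity $R_{k-1}+R_k=(d+1)d^{k-1}$ with $k=n-2$. This identity follows by partitioning the $(d+1)d^{n-3}$ valid Kautz $(n-2)$-strings according to whether $b_{n-2}=b_1$ or not: the first class is in bijection with valid $(n-3)$-strings having $b_{n-3}\neq b_1$ via dropping the final symbol (the resulting string is valid and satisfies $b_{n-3}\neq b_{n-2}=b_1$, and conversely any such $(n-3)$-string extends by appending $b_1$). Combining everything yields $|S|=(d-1)(d+1)d^{n-3}$. The only real subtlety in the plan is careful bookkeeping of the Kautz constraint $a_i\neq a_{i+1}$ under the imposed equalities $a_{n-2}=a_1$, $a_{n-1}=a_1$ or $a_{n-1}=a_2$, and a repeated use of $a_i\neq a_{i+1}$ to verify that every excluded pair of values is a pair of distinct values, so that each exclusion removes exactly two elements rather than one.
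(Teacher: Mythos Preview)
Your argument is correct, and it is cleaner than the paper's. The paper partitions $S$ differently, according to whether $a_{n-3}=a_1$ or $a_{n-3}\neq a_1$, and introduces $s_k$ (the number of valid Kautz $k$-strings with $b_k=b_1$, the complement of your $R_k$). It then solves the recurrence $s_k=(d+1)d^{k-2}-s_{k-1}$ for the closed form $s_k=d^{k-1}-(-1)^k d$, computes $|S_1|$ and $|S_2|$ separately in terms of $s_{n-3}$, and adds. Your decomposition by the two disjuncts exploits the observation that $a_{n-1}=a_2$ and $a_{n-1}=a_1$ are mutually exclusive, reducing the count immediately to $(d-1)(R_{n-3}+R_{n-2})$; the identity $R_{k-1}+R_k=(d+1)d^{k-1}$ then finishes things without ever needing the closed form of $R_k$ (or $s_k$). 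So both proofs rest on the same underlying recurrence, but yours sidesteps the explicit solution and the attendant sign bookkeeping, at the cost of one extra sentence verifying disjointness of $A$ and $B$.
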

\begin{proof}
  For $n\leq 4$ this is easy to check. For $n\geq 5$ we proceed by the following argument. We
  consider the partition $S=S_1\cup S_2$ where
  \begin{align*}
    S_1 &= \{(a_1,\ldots,a_n)\in S\ :\ a_{n-3}=a_1\}, & S_2 &= \{(a_1,\ldots,a_n)\in S\ :\
                                                              a_{n-3}\neq a_1\}.
  \end{align*}
  Let $s_k$ be the number of words $a_1\ldots a_k$ over the alphabet $\ints_{d+1}$ which satisfy
  $a_k=a_1$ and $a_i\neq a_{i+1}$ for all $i\in\{1,\ldots,k-1\}$. Then $s_2=0$ and
  $s_k=(d+1)d^{k-2}-s_{k-1}$ for $k\geq 3$. It follows by induction on $k$ that $s_k=d^{k-1}-(-1)^kd$. Every vector
  $(a_1,\ldots,a_{n-3})\in\ints_{d+1}^{n-3}$ with $a_i\neq a_{i+1}$ and $a_{n-3}=a_1$ can be
  extended to an element of $S_1$ by choosing $a_{n-2}\in\ints_{d+1}\setminus\{a_1\}$, $a_{n-1}=a_1$
  and $a_{n}\in\ints_{d+1}\setminus\{a_1,a_2\}$, hence
  \[\lvert S_1\rvert = s_{n-3}d(d-1)=\left(d^{n-4}-(-1)^{n-3}d\right)d(d-1).\]
  If $a_{n-3}\neq a_1$ then we can choose $(a_{n-2},a_{n-1})=(a_1,a_2)$ and
  $a_n\in\ints_{d+1}\setminus\{a_2,a_3\}$, or $a_{n-2}\in\ints_{d+1}\setminus\{a_1,a_{n-3}\}$,
  $a_{n-1}=a_1$ and $a_n=\ints_{d+1}\setminus\{a_1,a_2\}$, hence
  \begin{align*}
    \lvert S_2\rvert &= \left[(d+1)d^{n-4}-s_{n-3}\right]\left[(d-1)+(d-1)^2\right] \\
                     &= \left[(d+1)d^{n-4}-d^{n-4}+(-1)^{n-3}d\right]d(d-1)\\
                     &= \left[d^{n-3}+(-1)^{n-3}d\right]d(d-1). 
  \end{align*}
  Finally,
  \[\lvert S\rvert=\lvert S_1\rvert+\lvert S_2\rvert=d(d-1)\left[d^{n-4}-(-1)^{n-3}d+d^{n-3}+(-1)^{n-3}d\right]=(d+1)(d-1)d^{n-3}.\qedhere\]
\end{proof}

\begin{lemma}\label{lem:power_domination_kautz}
  The set $S$ defined in~(\ref{eq:pd_set_kautz}) is a power dominating set for $G=\K(d,n)$.
\end{lemma}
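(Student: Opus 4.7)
The plan is to follow the same approach as in Lemma~\ref{lem:power_domination}: writing $Z=S\cup\Nout(S)$, it suffices to verify that $\lvert Z\cap X(c_1,\ldots,c_{n-1})\rvert\geqslant d-1$ for every valid tuple $(c_1,\ldots,c_{n-1})$ (so $c_i\neq c_{i+1}$ for all $i$). The cases $n\in\{2,3,4\}$ can be dispatched by a short direct inspection against the corresponding lines of~(\ref{eq:pd_set_kautz}); the substantive argument is for $n\geqslant 5$, on which I focus.

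Since $\Nout(v)=X(v_1,\ldots,v_{n-1})$ for every $v=(v_1,\ldots,v_n)$, the set $\Nout(S)$ is a union of $X$-sets. This yields a clean dichotomy for each tuple $(c_1,\ldots,c_{n-1})$: either $X(c_1,\ldots,c_{n-1})\subseteq\Nout(S)\subseteq Z$ (in which case $\lvert Z\cap X\rvert=d$ and nothing more is needed), or $\Nout(S)\cap X(c_1,\ldots,c_{n-1})=\emptyset$ and one must show $\lvert S\cap X(c_1,\ldots,c_{n-1})\rvert\geqslant d-1$. The first alternative holds iff some predecessor $v=(a_0,c_1,\ldots,c_{n-1})$ with $a_0\in\ints_{d+1}\setminus\{c_1\}$ lies in $S$. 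Reading off the two clauses of~(\ref{eq:pd_set_kautz}) with $v_1=a_0$ and $v_i=c_{i-1}$ for $i\geqslant 2$, such a $v$ arises either via the clause $(v_{n-2},v_{n-1})=(v_1,v_2)\wedge v_n\neq v_3$, which forces $a_0=c_{n-3}$ and requires $c_{n-2}=c_1$ together with $c_{n-1}\neq c_2$ (the admissibility condition $a_0\neq c_1$ being automatic since $c_{n-3}\neq c_{n-2}=c_1$); or via the clause $v_{n-1}=v_1\wedge v_n\neq v_2$, which forces $a_0=c_{n-2}$, needs $c_{n-2}\neq c_1$, and requires $c_{n-1}\neq c_1$.

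Combining, $X(c_1,\ldots,c_{n-1})$ fails to be contained in $\Nout(S)$ exactly in two residual situations: (i) $c_{n-2}=c_1$ with $c_{n-1}=c_2$, or (ii) $c_{n-2}\neq c_1$ with $c_{n-1}=c_1$. In case~(i) the second clause cannot apply (it would need $c_{n-2}\neq c_1$), and the first clause picks out the $d-1$ vertices $(c_1,\ldots,c_{n-1},c_n)$ with $c_n\in\ints_{d+1}\setminus\{c_2,c_3\}$ (the two excluded values being distinct since $c_2\neq c_3$). Symmetrically, in case~(ii) the first clause cannot apply and the second clause supplies the $d-1$ vertices with $c_n\in\ints_{d+1}\setminus\{c_1,c_2\}$. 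The main obstacle I anticipate is the bookkeeping in this case analysis: one must verify systematically that the two clauses of~(\ref{eq:pd_set_kautz}) jointly cover every $X$-set outside the two residual cases, and that in each residual case exactly one clause applies and supplies the full $d-1$ vertices. No new ideas beyond Lemma~\ref{lem:power_domination} are needed, but the two-part definition of $S$ roughly doubles the case-work.
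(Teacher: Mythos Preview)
Your proposal is correct and follows essentially the same approach as the paper: you verify $\lvert Z\cap X(c_1,\ldots,c_{n-1})\rvert\geqslant d-1$ by splitting into the case where some predecessor of $X$ lies in $S$ (so $X\subseteq\Nout(S)$) versus the two residual cases, which match the paper's Cases~1 and~4. The paper simply enumerates five cases on $(c_{n-2},c_{n-1})$ directly rather than framing the non-residual ones through the dichotomy, but the underlying case analysis and the specific predecessors exhibited are identical.
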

\begin{proof}
  For $Z=S\cup \Nout(S)$ it is sufficient to show that
  $\lvert Z\cap X(a_1,\ldots,a_{n-1})\rvert\geqslant d-1$ for every $(a_1,\ldots,a_{n-1})$. We
  provide the full argument for $n\geqslant 5$ (the cases $n=2$, $n=3$ and $n=4$ are easy to check).
  \begin{description}
  \item[Case 1.] If $a_{n-2}=a_1$ and $a_{n-1}=a_2$ then
    \[\lvert S\cap X(a_1,\ldots,a_{n-1})\rvert=\lvert\{\left(a_1,\ldots,a_n\right)\ :\ a_n\in
      \ints_{d+1}\setminus\{a_2, a_3\}\}\rvert=d-1,\]
    and the claim follows from $Z\supseteq S$.
  \item[Case 2.] If $a_{n-2}=a_1$ and $a_{n-1}\neq a_2$, then $X(a_1,\ldots,a_{n-1})\subseteq Z$
    because
    \[X(a_1,\ldots,a_{n-1})=\Nout((a_{n-3},a_1,a_2,\ldots,a_{n-1}))\]
    and $(a_{n-3},a_1,a_2,\ldots,a_{n-1})\in S$.
  \item[Case 3.] If $a_{n-2}\neq a_1$ and $a_{n-1}= a_2$, then $X(a_1,\ldots,a_{n-1})\subseteq Z$
    because
    \[X(a_1,\ldots,a_{n-1})=\Nout((a_{n-2},a_1,a_2,\ldots,a_{n-1}))\] and
    $(a_{n-2},a_1,a_2,\ldots,a_{n-1})\in S$.
  \item[Case 4.] If $a_{n-2}\neq a_1$ and $a_{n-1} = a_1$ then
    \[\lvert S\cap X(a_1,\ldots,a_{n-1})\rvert=\lvert\{\left(a_1,\ldots,a_n\right)\ :\ a_n\in
      \ints_{d+1}\setminus\{a_1,a_2\}\}\rvert=d-1,\]
    and the claim follows from $Z\supseteq S$.
  \item[Case 5.] If $a_{n-2}\neq a_1$ and $a_{n-1} \not\in \{a_1,a_2\}$, then
    $X(a_1,\ldots,a_{n-1})\subseteq Z$ because
    \[X(a_1,\ldots,a_{n-1})=\Nout((a_{n-2},a_1,a_2,\ldots,a_{n-1}))\] and
    $(a_{n-2},a_1,a_2,\ldots,a_{n-1})\in S$. \qedhere
	\end{description}
\end{proof}
The second part of Theorem~\ref{main:thm:kautz} follows from Lemmas~\ref{lb:pdkautz},~\ref{lem:size_of_S} and~\ref{lem:power_domination_kautz}.

\section{Conclusion}

In this paper, we have determined the zero forcing number and power domination number of de Bruijn and
Kautz digraphs. There are many variants of de Bruijn and Kautz digraphs introduced and studied over
the years, one of them being generalized de Bruijn digraphs $GB(d,n)$ and generalised Kautz digraphs
$GK(d,n)$ which can be defined as follows:
\begin{align*}
V(GB(d,n)) &= \{0,1, \ldots, n-1\}, \\
A(GB(d,n)) &= \left\{(x,y)\ :\ y \equiv dx +i \pmod{n},\ 0\leq i \leq d-1\right\},\\
V(GK(d,n))&=\{0,1, \ldots, n-1\},\\
A(GK(d,n))&=\left\{(x,y)\ ;\ y \equiv -dx -i \pmod{n},\ 1\leq i \leq d\right\}.
\end{align*}
We leave it as an open problem to determine the zero forcing number and power domination number of generalised de Bruijn and Kautz digraphs.


\begin{thebibliography}{10}
	
\bibitem{AazamiStilp2009} A.~Aazami, and K.~Stilp.  \newblock Approximation Algorithms and Hardness
  for Domination with Propagation.  \newblock {\em SIAM Journal on Discrete Mathematics},
  23:1382-1399, 2009.
	
\bibitem{AIM-2008-Zeroforcingsets} {AIM Minimum Rank -- Special Graphs Work Group}.  \newblock Zero
  forcing sets and the minimum rank of graphs.  \newblock {\em Linear Algebra and its Applications},
  428(7):1628--1648, 2008.
	
\bibitem{BarioliBarrettFallatEtAl2010} F.~Barioli, W.~Barrett, S.~M. Fallat, H.~T. Hall, L.~Hogben,
  B.~Shader, P.~van~den Driessche, and H.~van~der Holst.  \newblock Zero forcing parameters and
  minimum rank problems.  \newblock {\em Linear Algebra and its Applications}, 433(2):401--411,
  2010.
	
\bibitem{Barioli2009} F.~Barioli, S.~M. Fallat, H.~T. Hall, D.~Hershkowitz, L.~Hogben, H.~van~der
  Holst, and B.~Shader, \newblock On the minimum rank of not necessarily symmetric matrices: a
  preliminary study.  \newblock {\em Electronic Journal of Linear Algebra,}, 18:126--145, 2009.
	
\bibitem{BarioliBarrettFallatHallHogbenShaderDriesscheHolst-2012-ParametersRelatedto} F.~Barioli,
  W.~Barrett, S.~M. Fallat, H.~T. Hall, L.~Hogben, B.~Shader, P.~van~den Driessche, and H.~van~der
  Holst.  \newblock Parameters related to tree-width, zero forcing, and maximum nullity of a graph.
  \newblock {\em Journal of Graph Theory}, 72(2):146--177, 2012.
	
\bibitem{BermanFriedlandHogbenEtAl2008} A.~Berman, S.~Friedland, L.~Hogben, U.~G. Rothblum, and
  B.~Shader.  \newblock An upper bound for the minimum rank of a graph.  \newblock {\em Linear
    Algebra and its Applications}, 429(7):1629--1638, 2008.
	
\bibitem{Chang2012} G.~J.~Chang, P.~Dorbec, P.~Montassier, and A.~Raspaud.  \newblock Generalized
  power domination of graphs.  \newblock {\em Discrete Applied Mathematics}, 160:1691--1698, 2012.
	
\bibitem{Dorbec2008} P.~Dorbec, M.~Mollard, S.~Klavzar, and S.~Spacapan.  \newblock Power Domination
  in Product Graphs.  \newblock {\em SIAM Journal on Discrete Mathematics}, 22:554--567, 2008.
		
\bibitem{Dong2015a} Y.~Dong, E.~Shan, and L.~Kang \newblock Constructing the minimum dominating sets
  of generalized de Bruijn digraphs.  \newblock {\em Discrete Mathematics}, 338:1501--1508, 2015.
	
\bibitem{EdholmHogbenHuynhEtAl2012} C.~J. Edholm, L.~Hogben, M.~Huynh, J.~LaGrange, and D.~D. Row.
  \newblock Vertex and edge spread of zero forcing number, maximum nullity, and minimum rank of a
  graph.  \newblock {\em Linear Algebra and its Applications}, 436(12):4352--4372, 2012.  \newblock
  Special Issue on Matrices Described by Patterns.
	
\bibitem{Haynes2002a} T.~W. Haynes, S.~M. Hedetniemi, S.~T. Hedetniemi, and M.~A. Henning.
  \newblock Domination in graphs applied to electric power networks.  \newblock {\em {SIAM}
    J. Discrete Math.}, 15(4):519--529, 2002.
	
\bibitem{HogbenHuynhKingsleyMeyerWalkerYoung-2012-Propagationtimezero} L.~Hogben, M.~Huynh,
  N.~Kingsley, S.~Meyer, S.~Walker, and M.~Young.  \newblock Propagation time for zero forcing on a
  graph.  \newblock {\em Discrete Applied Mathematics}, 160(13-14):1994--2005, 2012.
	
	
\bibitem{Huang2006} J. Huang, and J.-M. Xu.  \newblock The bondage numbers of extended de Bruijn and
  Kautz digraphs.  \newblock {\em Computers and Mathematics with Applications}, 51:1137--1147, 2006.
	
\bibitem{Imase1983} M. Imase, and M. Itoh.  \newblock A Design for Directed Graphs with Minimum Diameter.
  \newblock {\em IEEE Transactions on Computers, Institute of Electrical and Electronics Engineers
    (IEEE),}, 32:782--784, 1983.
		
\bibitem{Kuo2015} J.~Kuo and W.-L. Wu.  \newblock Power domination in generalized undirected de
  Bruijn graphs and Kautz graphs.  \newblock {\em Discrete Math. Algorithm. Appl.}, 07:2961--2973,
  2015.
		
\bibitem{Lu.etal_2015_Proofconjecturezero} L.~{Lu}, B.~{Wu}, and Z.~{Tang}.  \newblock Proof of a
  conjecture on the zero forcing number of a graph.  \newblock {\em
    arXiv:\href{http://arxiv.org/abs/1507.01364}{1507.01364}}, 2015.
	
\bibitem{Severini2008} S.~Severini.  \newblock Nondiscriminatory propagation on trees.  \newblock
  {\em Journal of Physics A: Mathematical and Theoretical}, 41(48):482002, 2008.
	
\bibitem{Stephen2015} S.~Stephen, B.~Rajan, J.~Ryan, C.~Grigorious, and A.~William.  \newblock Power
  domination in certain chemical structures.  \newblock {\em Journal of Discrete Algorithms},
  33:10--18, 2015.
	
\end{thebibliography}

\end{document}